\documentclass[10pt,journal]{IEEEtran}

\usepackage[T1]{fontenc}
\usepackage[utf8]{inputenc}
\usepackage{lmodern}
\usepackage{amsmath,amssymb,mathtools}
\usepackage{graphicx}
\usepackage{booktabs,array}
\usepackage{cite}
\usepackage{url}
\usepackage{placeins}

\newtheorem{theorem}{Theorem}
\newtheorem{proposition}[theorem]{Proposition}
\newtheorem{corollary}[theorem]{Corollary}
\newtheorem{lemma}[theorem]{Lemma}

\newcommand{\E}{\mathbb{E}}
\newcommand{\Pp}{\mathbb{P}}
\newcommand{\1}[1]{\mathbf{1}\!\left\{#1\right\}}
\newcommand{\Exp}{\operatorname{Exp}}
\DeclareMathOperator{\Var}{Var}

\title{Fast Collision-Free Acquisition in 1-Persistent
Age-Threshold Slotted ALOHA via Role-Protected Counters}

\author{Pl\'inio~S.~Dester%
\thanks{P. S. Dester is with the School of Engineering, S\~ao Paulo State
University (UNESP), S\~ao Jo\~ao da Boa Vista 13876-750, Brazil (e-mail:
plinio.dester@unesp.br).
This work was financed by RNP/MCTI Brasil 6G (01245.020548/2021-07)}
}
\date{}

\begin{document}
\maketitle

\begin{abstract}
Goal-oriented sensing, estimation, and control benefit from prompt, regular
access to task-relevant updates.  Although 1-persistent age-threshold slotted
ALOHA (1-pTSA) can self-organize into a periodic collision-free schedule, its
acquisition transient can dominate finite-horizon performance.  We propose
role-protected counter-threshold slotted ALOHA (RP--CTSA), which separates
reservation memory from the age of information (AoI).  A scheduled singleton
retains its phase after colliding with active contenders, whereas a collision
involving multiple scheduled nodes releases them immediately; AoI resets only
after a decoded update.  The protocol requires individual acknowledgments and
a binary RELEASE/HOLD indication, but no centralized phase assignment or
identification of colliders.  Let $n$ denote the number of nodes and $\Gamma_n$
the counter threshold.  Under inverse-population access scaling, the
acquisition dynamics admit an exact pure-death representation.  When
$\Gamma_n=n$, the acquisition time is $O(n^2)$; when
$\Gamma_n\sim(1+\theta)n$, with $\theta>0$, it is $O(n\log n)$.  Simulations
confirm both regimes and show substantial finite-horizon AoI gains over 1-pTSA
while preserving every collision-free schedule.
\end{abstract}

\begin{IEEEkeywords}
Age of information, goal-oriented communications, distributed random access,
slotted ALOHA, transient analysis.
\end{IEEEkeywords}

\section{Introduction}

Goal-oriented communications value an update through its effect on a
downstream task---for example remote estimation, feedback control,
coordination, or distributed inference---rather than through reliable bit
delivery alone \cite{Gunduz2023}.  Many real-time applications benefit when
useful observations arrive both promptly and regularly.  The age of
information (AoI) is a
tractable measure of this timeliness \cite{YatesSurvey2021}.  AoI is not a
universal task-performance metric; nevertheless, rapidly establishing
recurring collision-free update opportunities is useful whenever a task is
sensitive to stale or irregular observations.  We analyze AoI as one concrete
consequence of this access-layer service process.
RP--CTSA does not determine semantic relevance or task utility; it supplies
predictable transmission opportunities to a higher-layer sampling or
goal-oriented policy.

We consider receiver-assisted but uncoordinated access by a fixed, finite
population of $n$ homogeneous generate-at-will sources.  Here,
\emph{uncoordinated} means that no scheduler selects transmitters or assigns
reservation phases, nodes exchange neither messages nor local states, and the
receiver neither maintains nor announces the reservation word.  Nodes share
only the slot clock and preconfigured protocol parameters.  Active nodes make
independent Bernoulli access decisions, whereas scheduled nodes follow their
local counters.  Under the collision model, exactly one transmission is
decoded, whereas two or more transmissions fail.  A decoded update resets
receiver-side AoI; an addressed acknowledgment (ACK) informs only its
transmitter and enables a local sleep interval.  Thus access decisions are
distributed and receiver-assisted, rather than feedback-free.  We first
analyze each finite network and then a sequence of such networks as
$n\to\infty$.

The baseline 1-persistent age-threshold slotted ALOHA (1-pTSA) is readily
described from one node's viewpoint \cite{Yavascan2021,Dester1Persistent2026}.
After a success, the node remains silent while its AoI is below $\Gamma_n$ and
makes a deterministic transmission when the threshold is reached; this first
attempt with probability one explains ``1-persistent.''  If the attempt
collides, no ACK is received, the AoI keeps increasing, and the node attempts
independently with probability $p_n$ in each subsequent slot until it
succeeds.  A success resets the AoI and places the next deterministic attempt
exactly $\Gamma_n$ slots later.

This local rule can learn a global collision-free word.  Once all nodes occupy
distinct phases modulo $\Gamma_n$, each succeeds periodically and the word repeats
forever \cite{Dester1Persistent2026,DesterDeterministic2026}.  The aggregate
throughput is then $n/\Gamma_n$, and each node has average and peak AoI
$(\Gamma_n+1)/2$ and $\Gamma_n$.  Thus at $\Gamma_n=n$ the throughput is one and normalized
average AoI tends to $1/2$, matching period-$n$ TDMA.  At $\Gamma_n\sim2n$, the
throughput and normalized average AoI tend to $1/2$ and $1$, respectively,
whereas period-$n$ TDMA has normalized average AoI tending to $1/2$.

The weakness is acquisition time.  In 1-pTSA, an active attempt that collides
with an already scheduled singleton makes that singleton miss its ACK and
return to contention.  A learned phase is therefore destroyed, the unresolved
population need not decrease, and favorable steady-state performance may be
irrelevant over a practical horizon.  Any role-blind collision rule that
releases a scheduled collider with positive probability can also erase an
established scheduled singleton when an active node hits it; the decisive
modification is therefore role protection.

Role-protected counter-threshold slotted ALOHA (RP--CTSA) makes this distinction
and separates the reservation counter from AoI.  A scheduled singleton
always keeps its phase, even when active attempts make its data packet collide.
A scheduled cohort of two or more nodes is released immediately to random
access, and active colliders remain active.  An active node is installed only
when it transmits alone in an empty phase.  A collision never resets AoI.
After a data collision, only due scheduled nodes transmit a short role packet
in a reserved control interval; active nodes remain silent.  Under the same
idle/singleton/collision abstraction as the data channel, the receiver can
test whether the number of due scheduled nodes is at least two and broadcast
a common RELEASE/HOLD decision.  It need not identify any colliding
transmitter, estimate multiplicity beyond this binary test, or assign a phase.
Section~II gives the distributed node rules, feedback abstraction, and
physical limitations of this role test.

This role protection changes the global state geometry.  After binaryization
---immediate at the synchronized start and completed within one frame from a
general admissible state---every scheduled phase is either empty or occupied
by one node.  The number of holes then equals the number of active nodes plus the exogenous slack
$s_n=\Gamma_n-n$, and every state change removes one active node.  The acquisition problem becomes a monotone
pure-death process.  This reduction exposes a sharp distinction that the
notation $\Gamma_n\sim n$ would hide: fixed slack has a quadratic acquisition law,
whereas linear slack has an $n\log n$ law, with an explicit crossover between
them.

Large-population random-access systems are often analyzed through mean-field
limits, which replace a stochastic population process by a deterministic
evolution \cite{Benaim2008,Cecchi2021}.  Our analysis uses no such closure:
role protection first yields an exact finite-$n$ pure-death chain, and only
then do we let $n\to\infty$.  This distinction matters at fixed slack, where
the scaled acquisition time has a nondegenerate random limit.

The contributions are threefold.

\begin{itemize}
\item We formulate immediate-release RP--CTSA at slot level, explicitly
separate reservation memory from AoI, and prove that every collision-free
period-$\Gamma_n$ word is invariant.  For every fixed $n$, $\Gamma_n\ge n$, and
$0<p_n<1$, the protocol acquires such a word almost surely with finite mean
whenever every node is either active or assigned to one reservation phase.

\item From a synchronized start, we reduce acquisition exactly to a
pure-death chain in hole-opportunity time and derive an arrangement-free
one-dimensional frame kernel.  Under $np_n\to c$, this yields a
distributional $n^2$ law for every fixed limiting slack $\Gamma_n-n\to s$; at
$\Gamma_n=n$,
$\E\tau_n\sim\pi^2n^2/(6c)$.  For diverging slack we prove concentration on
an explicit scale; if $\Gamma_n/n\to1+\theta$ with $\theta>0$, then both
$\tau_n/(n\log n)$ in probability and $\E\tau_n/(n\log n)$ converge to
$(1+\theta)/(c\theta)$, and $\Gamma_n\sim2n$ gives $2/c$.

\item Monte Carlo based on an exact event-skipping simulator and independent
slot-level simulations corroborate
both headline regimes and quantify how faster acquisition improves
finite-horizon AoI relative to 1-pTSA, while the collision-free orbit remains
unchanged.  A comparison with the literature L-ZC protocol benchmarks this
gain against a fast collision-free learner with richer sensing feedback.
\end{itemize}

The age-aware random-access literature mainly controls contention through
freshness.  Age-gain thinning and MiSTA improve stationary random access by
adapting thresholds or thinning a mini-slot collision
\cite{Chen2022,Ahmetoglu2022}.  These policies do not acquire a persistent
phase word.  Collision-resolution random access and reservation/data
minislots provide closer packet-level precedents
\cite{Pan2022,Wang2023}; classical tree algorithms recursively drain a
collided cohort \cite{Capetanakis1979}.  Their state is reset across update
episodes, whereas an RP--CTSA singleton becomes a reusable reservation.

Distributed zero-collision MACs are the closest structural relatives.
Feedback-memory, CSMA/E2CA, semi-random backoff, and self-organizing TDMA all
learn deterministic resource reuse, and some quantify finite-network
transients \cite{Park2010,Barcelo2011,He2013,Derakhshani2019}.  L-ZC, in
particular, retains successful phases and relocates collided users among
phases observed idle in the preceding frame \cite{Fang2013}; we therefore use
it below as an information-richer acquisition benchmark.
Accordingly, we do not claim the first collision-free MAC or the first
transient analysis.  The distinction is the 1-pTSA/AoI state space, explicit
counter--age separation, protection of learned singleton phases, and sharp
acquisition laws at critical packing and under slack.

\section{System Model and Protocol Rules}
\label{sec:model}

\subsection{Distributed collision channel and feedback}

Consider a fixed finite population of $n\ge2$ homogeneous, generate-at-will
sources sharing a synchronized slotted uplink to one receiver.  A fresh sample
is available whenever a node transmits; there are no packet queues or
exogenous arrivals.  Each fixed-duration logical slot reserves a data
interval, an ACK window, a scheduled-role probe, and a common-decision window;
the last two are used only after a data collision.  Detailed PHY airtime and
control-energy accounting is outside the scope of this paper.

The data interval follows the ideal collision model.  The receiver observes
an idle interval when no node transmits, decodes one packet when exactly one
node transmits, and detects a collision when two or more nodes transmit;
capture, channel erasures, and hidden terminals are excluded.  After a
decoded singleton, the corresponding receiver-side AoI resets and the
receiver sends an error-free ACK addressed to that source.  The ACK lets the
source update its local state and sleep until its next prescribed attempt.
Absence of an ACK tells a transmitting node only that its packet was not
decoded.  No entity selects a transmitter, assigns a phase, announces an
idle-phase list, or distributes the global state; the common values
$\Gamma_n\in\{n,n+1,\ldots\}$ and $p_n\in(0,1)$ are merely preconfigured.

Let $\Delta_i(t)\in\{1,2,\ldots\}$ be the AoI of node $i$ at the
beginning of slot $t$.  If $i$ is the decoded singleton in slot $t$, then
$\Delta_i(t+1)=1$; otherwise $\Delta_i(t+1)=\Delta_i(t)+1$.  In particular,
no counter or reservation action following a collision resets AoI.

\subsection{Baseline: 1-pTSA}

Each transmitter mirrors its receiver-side AoI using the slot clock and its
ACK history.  After a success, the node
sleeps while $\Delta_i<\Gamma_n$ and transmits deterministically when
$\Delta_i=\Gamma_n$.  If this threshold attempt fails, the node is \emph{active}:
while $\Delta_i>\Gamma_n$, it attempts independently in every slot with probability
$p_n$ until an ACK is received.  A success restarts the same cycle.  Hence a
successful node returns exactly $\Gamma_n$ slots later and implicitly reuses a
phase modulo $\Gamma_n$.  Crucially, a scheduled threshold attempt that collides
also receives no ACK and becomes active, so that learned phase is lost.

\subsection{Proposed protocol: RP--CTSA}

RP--CTSA retains the same active random-access state but replaces the threshold
waiting period by an explicit reservation counter, separate from AoI.  A
scheduled node occupies one phase of a $\Gamma_n$-slot cycle and transmits
deterministically whenever that phase is due.  Equivalently, it maintains
$C_i\in\{1,\ldots,\Gamma_n\}$, is due when $C_i=1$, and after the current slot is
reinserted with $C_i=\Gamma_n$ when the rule below preserves its phase.  Other
scheduled counters decrement once per slot.  Counter updates occur after the
slot, so a node reinserted with $C_i=\Gamma_n$ after slot $t$ is due again in slot
$t+\Gamma_n$.  An active node has no reserved phase and attempts with probability
$p_n$, independently across active nodes and slots.

At slot $t$, let $D_t$ be the number of due scheduled nodes, $Y_t$ the number
of active attempts, and $N_t^{\rm tx}=D_t+Y_t$ the data multiplicity.  Let
$Q_t$ be the active population after the preceding slot and $H_t$ the number
of empty phases in the current reservation word.  This word is the cyclic
length-$\Gamma_n$ vector whose entries count scheduled nodes assigned to each
phase; a zero entry is called a \emph{hole}.  Physical success still requires
$N_t^{\rm tx}=1$.

Table~\ref{tab:rule} defines the immediate-release version of RP--CTSA.  An
``installation'' or ``reinsertion'' updates only the reservation counter,
without resetting either the receiver-side AoI or the node's local AoI copy.
The AoI of a transmitter resets only for $(D_t,Y_t)=(0,1)$ or $(1,0)$.

\begin{table}[t]
\centering
\caption{Immediate-release RP--CTSA action after slot $t$.}
\label{tab:rule}
\small
\begin{tabular}{@{}ccp{0.72\columnwidth}@{}}
\toprule
$D_t$ & $Y_t$ & Reservation action \\
\midrule
$0$ & $0$ & no insertion \\
$0$ & $1$ & install the active singleton \\
$0$ & $\ge2$ & all active nodes remain active \\
$1$ & any & reinsert the scheduled singleton; actives remain active \\
$\ge2$ & any & release all due scheduled nodes; actives remain active \\
\bottomrule
\end{tabular}
\end{table}

Equivalently, if $Z_t$ is the number of nodes written into the current
reservation phase,
\begin{equation}
 Z_t=
 \begin{cases}
  \1{Y_t=1},&D_t=0,\\
  1,&D_t=1,\\
  0,&D_t\ge2.
 \end{cases}
 \label{eq:rp-rule}
\end{equation}
When $D_t=1$, the reinserted node is the scheduled transmitter.  Thus, if
$Y_t\ge1$, its packet is lost and its AoI grows, but its phase is
protected.  When $D_t\ge2$, every due scheduled node is released immediately;
there is no probabilistic retention or winner election.

The data collision alone does not reveal how many of its transmitters were due
scheduled nodes, because active attempts may also be present.  RP--CTSA
therefore uses a scheduled-only role probe.  After a data collision, every due
scheduled node sends one short control packet in the reserved probe interval,
whereas active nodes remain silent.  Hence the probe multiplicity is exactly
$D_t$.  Under the same ideal collision model, the receiver observes an idle
probe for $D_t=0$, decodes a singleton probe for $D_t=1$, and detects a role
collision for $D_t\ge2$.  It broadcasts RELEASE only after a role collision
and HOLD otherwise.  Only due scheduled nodes act on this bit: HOLD reinserts
them in the same phase, whereas RELEASE makes them active.  Active nodes remain
active after every failed attempt.  No probe is needed after a decoded data
singleton; that transmitter uses its ACK and locally known role to install or
reinsert its counter.  Thus the receiver neither learns the reservation word
nor identifies any transmitter involved in a collision.

This role probe is an additional control primitive, not information supplied
by an ordinary ACK.  A practical implementation could use a short packet with
a preamble and error-detection field: no detected signal, one valid packet,
and detected energy without a valid packet represent the three idealized
outcomes above.  The mechanism requires minislot synchronization, silence of
active nodes during the probe, reserved control airtime, and transmit/listen
energy when invoked.  Noise, fading, capture, or control-packet erasure can
invalidate the ideal classification.  A false RELEASE at $D_t=1$ destroys a
learned singleton reservation and can increase the active population, whereas
a false HOLD at $D_t\ge2$ preserves a multicohort that collides again at its
next counter expiration.  Persistent errors can therefore invalidate the
monotone acquisition argument.  We assume error-free role classification and
feedback, leaving PHY design and robustness for future work.

Each node acts only on its own role, local counter, ACK, and the common bit.
RP--CTSA is therefore distributed, receiver-assisted phase acquisition without
centralized scheduling.

\subsection{Acquisition and performance metrics}

Call a state \emph{admissible} if every node occurs exactly once, either in
the active set or in one reservation phase.  Let $X_n(t)$ be the acquisition
configuration after $t$ completed slots: the active set and labeled cyclic
reservation word, including its current phase, but not the unbounded AoI
variables.  Let $X_n(0)$ be the initial pre-slot configuration, and let
$\mathcal D_{n,\Gamma_n}$ be the set of states with $Q_t=0$ and at most one
scheduled node in every phase.  Since $\Gamma_n\ge n$, such states exist.  The
process $\{X_n(t)\}_{t\ge0}$ is a time-homogeneous Markov chain on a finite
state space.  We use standard hitting-time and strong Markov facts for
discrete-time chains \cite{Norris1997}.  Its
post-slot acquisition time is
\begin{equation}
 \tau_n=\inf\{t\ge0:X_n(t)\in\mathcal D_{n,\Gamma_n}\},
 \label{eq:tau}
\end{equation}
with the precise origin stated for each experiment.  The main theorems use a
synchronized threshold start: all $n$ nodes are due in slot zero, collide,
and are immediately released.  The initial collision contributes only one
slot and has no effect on the asymptotic limits.
For 1-pTSA, the same acquisition criterion means that no node is active and
the nodes' implicit threshold phases modulo $\Gamma_n$ are distinct.

Over a finite horizon $\ell$, the performance statistic used below is
\begin{equation}
 \overline\Delta_n(\ell)=\frac{1}{n\ell}
 \sum_{t=0}^{\ell-1}\sum_{i=1}^{n}\Delta_i(t),
 \qquad
 \widetilde\Delta_n(\ell)=\frac{\overline\Delta_n(\ell)}{n},
 \label{eq:aaoi}
\end{equation}
where the second quantity is normalized by $n$.

Table~\ref{tab:notation} collects the principal notation.  Plain Roman
capitals denote random objects; deterministic parameters and scales use Greek
or lowercase symbols.

\begin{table*}[t]
\caption{Principal notation.}
\label{tab:notation}
\centering
\small
\setlength{\tabcolsep}{4pt}
\renewcommand{\arraystretch}{1.10}
\begin{tabular}{@{}c p{0.36\textwidth} c p{0.36\textwidth}@{}}
\toprule
Symbol & Meaning & Symbol & Meaning \\
\midrule
$n$ & number of nodes & $\Gamma_n$ & counter threshold/frame length \\
$p_n$ & active access probability & $c$ & limit of $np_n$ \\
$\Delta_i(t)$ & AoI of node $i$ & $C_i$ & reservation counter \\
$D_t$ & due scheduled nodes & $Y_t$ & active attempts \\
$Q_t$ & active nodes & $H_t$ & holes in reservation word \\
$s_n$ & slack $\Gamma_n-n$ & $g_{n,m}$ & singleton-attempt probability \\
$X_n(t)$ & acquisition configuration & $\mathcal D_{n,\Gamma_n}$ & collision-free states \\
$\tau_n$ & acquisition time & $\widetilde\Delta_n(\ell)$ & normalized horizon-$\ell$ AAoI \\
$M_k$ & active count at frame $k$ & $G_{n,m}$ & geometric holding time \\
$T_s$ & fixed-slack limit & $\beta_n^\star$ & diverging-slack scale \\
$h_k^{(r)}$ & generalized harmonic number & $\Gamma_{\!\mathrm E}$ & Euler gamma function \\
$\ell$ & finite evaluation horizon & $\gamma_{\rm LZC}$ & L-ZC collision-phase retention probability \\
$\xrightarrow{\mathrm d}$ & convergence in distribution & $\xrightarrow{\mathrm p}$ & convergence in probability \\
\bottomrule
\end{tabular}
\end{table*}

\section{Preserved Deterministic Regime}
\label{sec:invariance}

\begin{proposition}[Collision-free invariance]
\label{prop:invariance}
Every state in $\mathcal D_{n,\Gamma_n}$ is invariant under RP--CTSA up to cyclic
phase rotation.  Its aggregate throughput is $n/\Gamma_n$, every node succeeds
once per $\Gamma_n$ slots, and its per-node time-average AoI and peak AoI are
\begin{equation}
 \overline\Delta_{\rm cf}=\frac{\Gamma_n+1}{2},
 \qquad \Delta^{\rm peak}_{\rm cf}=\Gamma_n.
 \label{eq:cf-metrics}
\end{equation}
Thus the normalized average AoI tends to $1/2$ at $\Gamma_n=n$ and to $1$ at
$\Gamma_n\sim2n$.
\end{proposition}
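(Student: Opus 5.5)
The plan is to argue directly from the rule \eqref{eq:rp-rule}, slot by slot, for a state $x\in\mathcal D_{n,\Gamma_n}$. Since $Q_t=0$, no active node exists, so $Y_t=0$ in every slot. Since every phase holds at most one scheduled node, $D_t\in\{0,1\}$.

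\textbf{Step 1 (invariance).} By the table:
\begin{itemize}
\item If $D_t=0$, the slot is idle and nothing is inserted.
\item If $D_t=1$, then $N_t^{\rm tx}=1$ and the data singleton is decoded. The node receives its ACK and is reinserted with $C_i=\Gamma_n$. No probe is invoked.
\end{itemize}
No RELEASE is ever broadcast and no node becomes active, so $Q_{t+1}=0$. Every other counter decrements by one. The reservation word after slot $t$ is therefore the same cyclic word with the current phase advanced by one, which is a cyclic rotation. Hence the next state again lies in $\mathcal D_{n,\Gamma_n}$. Induction on $t$ shows the orbit stays in $\mathcal D_{n,\Gamma_n}$ forever and is periodic with period $\Gamma_n$. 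The only care needed is that the counter convention places a node reinserted after slot $t$ due exactly at $t+\Gamma_n$, which is stated in the protocol description.

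\textbf{Step 2 (throughput and success times).} Over any window of $\Gamma_n$ consecutive slots, each of the $n$ nodes is due exactly once, and each such slot is a decoded singleton. Exactly $n$ of the $\Gamma_n$ slots are therefore successes, giving throughput $n/\Gamma_n$. Each node succeeds exactly every $\Gamma_n$ slots.

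\textbf{Step 3 (AoI).} A success at slot $t$ gives $\Delta_i(t+1)=1$, and then $\Delta_i$ grows by one per slot. Within one period it takes the values $1,2,\ldots,\Gamma_n$, and the next success resets it after the value $\Gamma_n$. The time average is therefore
\[
\frac{1}{\Gamma_n}\sum_{k=1}^{\Gamma_n}k=\frac{\Gamma_n+1}{2},
\]
and the peak is $\Gamma_n$, which proves \eqref{eq:cf-metrics}. For the long-run average to equal this periodic average, the initial AoI values (which may be arbitrary) only matter until each node's first success, which occurs within $\Gamma_n$ slots. After that the AoI sequence is exactly periodic, so the Cesàro limit is unaffected.

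\textbf{Step 4 (normalized limits).} Dividing by $n$:
\begin{itemize}
\item If $\Gamma_n=n$, then $(n+1)/(2n)\to1/2$.
\item If $\Gamma_n\sim2n$, then $(\Gamma_n+1)/(2n)\to1$.
\end{itemize}

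There is no real obstacle. The most delicate point is the bookkeeping of the off-by-one conventions: post-slot counter updates and AoI measured at the beginning of the slot. These must be checked to confirm that the AoI cycle is $1,\ldots,\Gamma_n$ rather than $0,\ldots,\Gamma_n-1$.
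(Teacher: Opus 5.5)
Your proposal is correct and follows the paper's proof: with no active nodes, $Y_t=0$ and $D_t\in\{0,1\}$, so each occupied phase produces a decoded singleton that is reinserted in the same phase, empty phases stay empty, and summing the sawtooth $1,\ldots,\Gamma_n$ gives the AoI metrics. Your checks of the post-slot counter and beginning-of-slot AoI conventions, and your remark that arbitrary initial AoI values matter only until each node's first success, are careful additions that the paper leaves implicit.
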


\begin{IEEEproof}
There are no active nodes.  Every occupied phase therefore has $D_t=1$ and
$Y_t=0$, so its scheduled node succeeds and is reinserted in the same phase.
An empty phase remains empty.  The deterministic intersuccess time is
$\Gamma_n$, which gives \eqref{eq:cf-metrics} by summing the sawtooth
$1,2,\ldots,\Gamma_n$.
\end{IEEEproof}

The proposition formalizes the design constraint: a transient modification
should leave established singleton reservations untouched.  A role-blind
collision rule that releases a scheduled collider with positive probability
can erase a valid reservation when an active node hits it.  RP--CTSA removes
that birth mechanism:
when $D_t=1$, the scheduled reservation survives with probability one.

\section{Slack-Dependent Acquisition Analysis}
\label{sec:analysis}

Put $s_n=\Gamma_n-n\ge0$.  The following deterministic identity is the central
state reduction.  Reservation phases may initially contain multiple nodes.

\begin{proposition}[Global finite-system acquisition]
\label{prop:global}
For fixed $n$, $\Gamma_n\ge n$, and $0<p_n<1$, immediate-release RP--CTSA reaches
$\mathcal D_{n,\Gamma_n}$ almost surely with finite mean from every admissible
initial active set and scheduled phase word.
\end{proposition}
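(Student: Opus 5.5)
The proof will use the standard fact that a time-homogeneous Markov chain on a finite state space reaches a set $\mathcal D$ almost surely, with finite (indeed geometrically tailed) mean, from every state, provided there exist $L<\infty$ and $\varepsilon>0$ such that from every admissible state the probability of entering $\mathcal D$ within $L$ slots is at least $\varepsilon$. Indeed, by the strong Markov property at multiples of $L$ one then gets $\Pp(\tau_n>kL)\le(1-\varepsilon)^k$, so $\E\tau_n\le L/\varepsilon$. Admissibility is preserved by every rule in Table~\ref{tab:rule}, and $X_n(t)$ lives on a finite set. Since $n$, $\Gamma_n$ and $p_n\in(0,1)$ are fixed, it therefore suffices to exhibit, from each admissible state, one explicit finite sample path to $\mathcal D_{n,\Gamma_n}$ of bounded length. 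Every active access decision along such a path has probability $p_n$ or $1-p_n$, which is positive. Minimizing over the finitely many states then gives $\varepsilon>0$.

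\emph{Step 1 (binaryization).} Run one full frame of $\Gamma_n$ slots in which every active node stays silent; this has probability at least $(1-p_n)^{n\Gamma_n}$. During this frame each phase is due exactly once. A phase with $D_t\ge2$ collides, the probe detects a role collision, and the rule releases all of its nodes. A phase with $D_t=1$ and $Y_t=0$ succeeds and is reinserted. An empty phase stays empty. After the frame every phase holds at most one node, and the state remains admissible.

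\emph{Step 2 (hole counting).} After Step 1, let $q$ be the number of active nodes and $h$ the number of holes. Since there are $n-q$ occupied singleton phases, $h=\Gamma_n-(n-q)=q+s_n\ge q$. If $q=0$ we are in $\mathcal D_{n,\Gamma_n}$. If $q\ge1$, then $h\ge1$, and within the next $\Gamma_n$ slots some hole phase becomes due. The path I would use is this: at that slot exactly one designated active node transmits and all other actives stay silent, while in every other slot all actives stay silent. Because $D_t=0$ and $Y_t=1$, the designated node is installed, and no singleton is disturbed. Silent slots at singleton phases have $D_t=1$, $Y_t=0$, so those singletons are reinserted. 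The word therefore stays binary and $q$ drops by one.

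\emph{Step 3 (conclusion).} Repeating Step 2 at most $n$ times reaches $\mathcal D_{n,\Gamma_n}$ within $L=(n+1)\Gamma_n$ slots. This path has probability at least $\bigl(p_n(1-p_n)^{n}\bigr)^{n}(1-p_n)^{nL}>0$, uniformly over the starting state.

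The main obstacle I expect is checking, without leaning on the role-protection intuition, that along the chosen path no step ever releases a singleton or creates a multi-occupied phase. This reduces to verifying each row of Table~\ref{tab:rule} for the slot types used. The counting identity $h=q+s_n$ is what guarantees that a hole is always available while $q\ge1$, including in the critical case $s_n=0$.
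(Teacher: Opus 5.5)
Your argument is correct, but it follows a different route from the paper's. You use the generic finite-state accessibility argument. From every admissible state you exhibit one positive-probability path of length at most $L=(n+1)\Gamma_n$ into $\mathcal D_{n,\Gamma_n}$: an all-silent frame that binaryizes the word, then one lone active attempt per due hole. You then turn the resulting uniform lower bound $\varepsilon$ into the tail $\Pp(\tau_n>kL)\le(1-\varepsilon)^k$.

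The paper instead proves properties of \emph{every} sample path. Initial multicohorts are released at their first due slot, and no rule ever creates a new one, so the word is binary after one frame whatever the actives do. From then on $H_t=s_n+Q_t$ and the active count never increases. Each frame at level $m$ contains a death with probability $1-(1-g_{n,m})^{m+s_n}>0$. Hence $\tau_n$ is dominated by one frame plus, for each of at most $n$ levels, a geometric number of frames.

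Your route only requires checking the protocol rules along the chosen path. It also never invokes role protection, since your actives are silent whenever a scheduled singleton is due. Essentially the same path argument would therefore give finite-mean acquisition for 1-pTSA at fixed $n$, so role protection is not what drives this qualitative statement.

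The paper's route costs the global invariants but returns more. It gives an explicit mean bound of order $\Gamma_n\sum_{m=1}^{n}[1-(1-g_{n,m})^{m+s_n}]^{-1}$, whereas your $L/\varepsilon$ is astronomically large because $\varepsilon$ is exponentially small. It also establishes exactly the monotone pure-death structure on which the frame kernel and the limit theorems are built, and that is where role protection actually pays off.
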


\begin{IEEEproof}
Every scheduled multicohort present initially is released at its first due
slot, whereas no update can create a new multicohort.  Thus the word is
binary after at most one frame.  Thereafter, at active level $m>0$, there are
$m+s_n\ge1$ holes, and a frame contains a death with probability
$1-(1-g_{n,m})^{m+s_n}>0$, where
$g_{n,m}:=mp_n(1-p_n)^{m-1}$.  The waiting time at each of the finitely many
levels is therefore dominated by a finite-mean geometric number of frames,
and the active population can only decrease.
\end{IEEEproof}

\begin{lemma}[Binary reduction and hole identity]
\label{lem:holes}
Under immediate release and the synchronized start, all nodes are active
immediately after slot zero.  Subsequently every scheduled phase has
multiplicity zero or one.  If $Q_t=m$, then
\begin{equation}
 H_t=s_n+m. \label{eq:holes}
\end{equation}
An occupied phase leaves $m$ unchanged.  At a hole, the transition
$m\mapsto m-1$ occurs with probability
\begin{equation}
 g_{n,m}=m p_n(1-p_n)^{m-1}. \label{eq:g}
\end{equation}
No transition can increase $m$.
\end{lemma}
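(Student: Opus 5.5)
The plan is an induction on slots that maintains an invariant: the reservation word is binary, with every phase holding zero or one scheduled node, and the state is admissible. All conclusions of the lemma then follow from the rule \eqref{eq:rp-rule} and a count.

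\emph{Base case.} In the synchronized start all $n$ nodes are due in slot zero, so $D_0=n\ge2$. The role probe then produces a role collision, the receiver broadcasts RELEASE, and Table~\ref{tab:rule} makes every due node active. Immediately after slot zero we therefore have $Q=n$ and an empty word, so $H=\Gamma_n=s_n+n$. This word is trivially binary.

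\emph{Inductive step.} Assume that after some slot the word is binary and admissible with $Q_t=m$. At the next slot the current phase has multiplicity $D_t\in\{0,1\}$, by binarity. There are two cases.

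If $D_t=1$ (an occupied phase), \eqref{eq:rp-rule} gives $Z_t=1$. The same scheduled node is reinserted into the same phase, and every active transmitter remains active whatever $Y_t$ is. Hence $m$ and the word are unchanged, and the multiplicity stays one.

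If $D_t=0$ (a hole), then $Z_t=\1{Y_t=1}$. When $Y_t=1$ the single active transmitter is installed in the empty phase, which becomes multiplicity one, and $m\mapsto m-1$. Otherwise, with $Y_t=0$ or $Y_t\ge2$, nothing is written and the active set is unchanged.

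Since $D_t\ge2$ cannot occur, the release branch is never triggered. No phase ever receives a second node, because installation happens only into an empty phase. This preserves binarity and shows that $m$ never increases.

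\emph{Hole identity.} In a binary admissible word the number of occupied phases equals the number of scheduled nodes, which is $n-m$. Therefore $H_t=\Gamma_n-(n-m)=s_n+m$, which is \eqref{eq:holes}.

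\emph{Transition probability.} At a hole with $Q_t=m$, the $m$ active nodes attempt independently with probability $p_n$. The active set at that slot is determined by the past, and the current coin flips are independent of it, so conditioning on the configuration is legitimate. Then $\Pp(Y_t=1)=mp_n(1-p_n)^{m-1}$, which is \eqref{eq:g}.

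The only delicate point is checking that $D_t\ge2$ is truly excluded after slot zero, that is, that no transition creates a multicohort. This follows because the only write into an empty phase is a single installed node, and an occupied phase is only ever rewritten with its own single node. I would state this explicitly as the invariant, and the rest is bookkeeping.
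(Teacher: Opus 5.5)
Your proof is correct and follows essentially the same route as the paper. In both arguments the synchronized collision releases everyone, the only write is a unique active installation into a hole, protected singletons are never erased, and the binomial singleton probability gives $g_{n,m}$. The only cosmetic difference is that the paper phrases the hole identity as the invariant $H_t-Q_t=\Gamma_n-n$, since each installation lowers both by one; you instead count the occupied phases of the binary admissible word directly as $n-m$, which is equivalent.
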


\begin{IEEEproof}
Immediate release writes no member of the initial collided cohort, so all
$\Gamma_n$ phases are empty and the active count is $n$.  Thereafter the only insertion is a unique
active attempt at a hole, and it creates a singleton phase while removing
one active node.  Protected singleton phases are never erased.  Hence
$H_t-Q_t=\Gamma_n-n=s_n$ is invariant, and \eqref{eq:g} is the binomial singleton
probability among the $m$ active nodes.
\end{IEEEproof}

\subsection{Exact arrangement-free frame chain}

Take a frame boundary after the reset history has become binary and put
$M_k=Q_{t_0+k\Gamma_n}$.  For the synchronized start, take $t_0=1$, immediately
after the common collision and release, so that $M_0=n$.  Define the one-hole opportunity kernel
\begin{equation}
 \begin{aligned}
 \Pi_n(m,m-1)&=g_{n,m}, &&1\le m\le n,\\
 \Pi_n(m,m)&=1-g_{n,m}, &&1\le m\le n,\\
 \Pi_n(0,0)&=1.
 \end{aligned}
 \label{eq:Pn}
\end{equation}
All unlisted entries of $\Pi_n$ are zero, so state zero is absorbing.

\begin{proposition}[Exact batched pure-death kernel]
\label{prop:frame}
Conditional on $M_k=m$, the next frame contains exactly $s_n+m$ holes,
irrespective of their phase locations.  Consequently, the transition kernel
of $\{M_k\}_{k\geq0}$ is
\begin{equation}
 \Pp(M_{k+1}=j\mid M_k=m)
 =\Pi_n^{\,s_n+m}(m,j),
 \label{eq:frame-kernel}
\end{equation}
for $m,j\in\{0,\ldots,n\}$.
\end{proposition}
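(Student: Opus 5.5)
The plan is to follow the active population slot by slot through one frame $\{t_0+k\Gamma_n,\ldots,t_0+(k+1)\Gamma_n-1\}$ and show that only the hole slots can move it. By the time $t_0$ is taken, the word is binary, and it stays binary (Lemma~\ref{lem:holes}). So at each slot of the frame the current phase is either occupied by exactly one scheduled node or is a hole.

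\emph{Step 1 (the frame visits exactly $s_n+m$ holes).} Every counter decrements once per slot and is reinserted with value $\Gamma_n$. Hence each of the $\Gamma_n$ phases is due exactly once in a frame, and no phase changes position within the cycle. By \eqref{eq:holes}, the word at the frame boundary has exactly $s_n+m$ holes. A hole can become occupied only at its own due slot, through a unique active attempt. Protected singletons are never erased, since $D_t=1$ gives $Z_t=1$ in \eqref{eq:rp-rule}, and no multicohort can form. Therefore:
\begin{itemize}
\item the set of phases that are holes at their due slot in this frame is exactly the set of holes at the frame boundary;
\item filling a hole does not change the number of hole visits, because that phase is not due again until the next frame.
\end{itemize}
This gives exactly $s_n+m$ hole opportunities. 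The count depends only on $m$, not on where the holes lie, and this is what makes the kernel arrangement-free.

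\emph{Step 2 (occupied slots are neutral and do not bias later slots).} At an occupied phase we have $D_t=1$, so the scheduled node is reinserted and all actives remain active, whatever $Y_t$ is. Thus $Q$ is unchanged. The active nodes' Bernoulli decisions are independent across slots and nodes. Consequently, conditioning on what happened at an occupied slot does not affect the law of later slots, given the current $Q$.

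\emph{Step 3 (each hole visit is a step of $\Pi_n$).} Consider a hole slot at which $Q=m'$. The count drops to $m'-1$ exactly when $Y_t=1$, which has probability $g_{n,m'}$ by \eqref{eq:g}; otherwise it stays at $m'$. If $m'=0$, nothing happens, matching $\Pi_n(0,0)=1$. By the Markov property of $X_n(t)$ and the fresh independence of the attempts at each slot, the values of $Q$ sampled just after the successive hole visits form a Markov chain with one-step kernel $\Pi_n$, started at $m$. Composing the $s_n+m$ steps, and dropping the neutral occupied slots, yields \eqref{eq:frame-kernel}.

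The main obstacle is the bookkeeping in Step 1. The number of holes shrinks during the frame while the number of hole visits must stay fixed at the value $s_n+m$ set at the frame's start, and this requires the protection and no-multicohort invariants of Lemma~\ref{lem:holes}. The Markov composition in Step 3 is routine by comparison. I would state it via the strong Markov property applied at the successive (stopping) hole-visit times.
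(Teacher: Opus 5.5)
Your proposal is correct and follows the paper's argument: you count the $\Gamma_n-(n-m)=s_n+m$ holes at the frame boundary, observe that the in-frame dynamics cannot change which phases are visited as holes, and compose one $\Pi_n$ step per hole visit. Your Steps 1 and 2 spell out details that the paper compresses into ``events within the current frame cannot alter its already determined scheduled word'': filled holes are not revisited within the frame, protected singletons persist, and occupied slots leave $Q$ unchanged.
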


\begin{IEEEproof}
At the frame boundary, $n-m$ nodes are stored in distinct scheduled phases.
The other $\Gamma_n-(n-m)=s_n+m$ phases are therefore holes.  Events within the
current frame cannot alter its already determined scheduled word.  Enumerate
only those holes.  At each one, the current active population decreases by
one with probability \eqref{eq:g}, so after the $s_n+m$ holes the probability
of ending the frame at $j$ is \eqref{eq:frame-kernel}.
\end{IEEEproof}

This proposition needs neither phase mixing nor a uniform random word.  It
also gives an exact opportunity-time construction.  Let
\begin{equation}
 G_{n,m}\stackrel{\rm ind}{\sim}\operatorname{Geom}(g_{n,m}),
 \qquad
 \mathcal W_n=\sum_{m=1}^{n}\frac{G_{n,m}}{s_n+m},
 \label{eq:weighted-clock}
\end{equation}
where the geometric law is supported on $\{1,2,\ldots\}$.  Throughout
\eqref{eq:clock-bound}--\eqref{eq:two-sided-clock}, $\mathcal W_n$ denotes
exactly this dimensionless opportunity-weighted work; conversion to calendar
slots occurs only through multiplication by $\Gamma_n$.  Define the first
frame boundary after absorption by $K_n:=\inf\{k\ge1:M_k=0\}$.  For $s_n>0$,
\begin{equation}
 K_n-1\le\mathcal W_n\le K_n+\frac{n}{s_n}.
 \label{eq:clock-bound}
\end{equation}
The variables in \eqref{eq:weighted-clock} are independent because every
hole opportunity uses fresh independent Bernoulli attempts; after a death,
the strong Markov property starts the geometric count at the next active
level.
Indeed, a complete frame beginning with $m$ consumes exactly $s_n+m$
opportunities.  If it contains $r$ deaths, its weighted work lies between
one and $1+r/s_n$; summing $r$ over all frames proves the bound.  This
deterministic comparison is the key to Theorem~\ref{thm:slack}.
The lower inequality $K_n-1\le\mathcal W_n$ also holds when $s_n=0$:
each complete frame has $m$ opportunities, each of weight at least $1/m$.
Consequently,
\begin{equation}
 \tau_n\le \Gamma_n(\mathcal W_n+2)
 \label{eq:one-sided-clock}
\end{equation}
for every $s_n\ge0$, including the initial-slot convention.  This one-sided
bound controls the entrance tail in Theorem~\ref{thm:fixed-slack}.  For
$s_n>0$, combining \eqref{eq:clock-bound} with the postrelease bound
$\Gamma_n(K_n-1)<\tau_n-1\le \Gamma_n K_n$ also gives
\begin{equation}
 \left|\frac{\tau_n}{\Gamma_n}-\mathcal W_n\right|
 \le 2+\frac{n}{s_n}.
 \label{eq:two-sided-clock}
\end{equation}

\subsection{Fixed slack: a distributional entrance law}

For a fixed integer $s\ge0$, let
\begin{equation}
 T_s=\sum_{m=1}^{\infty}\frac{E_m}{c m(m+s)},
 \qquad E_m\stackrel{\mathrm{iid}}{\sim}\Exp(1).
 \label{eq:Ts}
\end{equation}
The series converges almost surely and in every $L^k$ of fixed order.

\begin{theorem}[Fixed-slack acquisition]
\label{thm:fixed-slack}
Suppose $s_n\to s\in\{0,1,\ldots\}$, $np_n\to c\in(0,\infty)$, and use
immediate-release RP--CTSA from the synchronized start.  Then
\begin{equation}
 \begin{aligned}
 \frac{\tau_n}{n^2}&\xrightarrow{\mathrm d} T_s,
 &\E\frac{\tau_n}{n^2}&\to\E T_s,\\
 \Var\!\left(\frac{\tau_n}{n^2}\right)&\to\Var(T_s).&&
 \end{aligned}
 \label{eq:fixed-limit}
\end{equation}
For $s=0$,
\begin{equation}
 \begin{aligned}
 \E T_0&=\frac{\pi^2}{6c},
 &\Var(T_0)&=\frac{\pi^4}{90c^2},\\
 \E e^{-uT_0}&=\frac{\pi\sqrt{u/c}}{\sinh(\pi\sqrt{u/c})}.&&
 \end{aligned}
 \label{eq:s0-moments}
\end{equation}
For an integer $s\ge1$, writing
$h_s^{(r)}=\sum_{j=1}^{s}j^{-r}$,
\begin{align}
 \E T_s&=\frac{h_s^{(1)}}{cs}, \label{eq:fixed-mean}\\
 \Var(T_s)&=
 \frac{2\zeta(2)-h_s^{(2)}-2h_s^{(1)}/s}{c^2s^2},\label{eq:fixed-var}\\
 \E e^{-uT_s}&=
 \frac{\Gamma_{\!\mathrm E}(1+a_s(u))
 \Gamma_{\!\mathrm E}(1+b_s(u))}
 {\Gamma_{\!\mathrm E}(1+s)},\label{eq:fixed-lt}
\end{align}
where $u\ge0$, $\Gamma_{\!\mathrm E}$ is the Euler gamma function
\cite{NISTDLMF2026}, and
$a_s(u),b_s(u)=(s\pm\sqrt{s^2-4u/c})/2$.  If $4u/c>s^2$, put
$x=s/2$ and $y=\sqrt{4u/c-s^2}/2$.  Then $a_s(u)=x+\mathrm{i}y$ and
$b_s(u)=x-\mathrm{i}y$, so the numerator in \eqref{eq:fixed-lt} is
$\Gamma_{\!\mathrm E}(1+x+\mathrm{i}y)
 \Gamma_{\!\mathrm E}(1+x-\mathrm{i}y)
=|\Gamma_{\!\mathrm E}(1+x+\mathrm{i}y)|^2$.
Since $\Gamma_{\!\mathrm E}(1+s)>0$, the resulting Laplace transform is
strictly positive and real; the symmetric expression is also independent of
the square-root branch.
\end{theorem}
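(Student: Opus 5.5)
Our plan is to work with the opportunity clock $\mathcal W_n$ of \eqref{eq:weighted-clock} and to transfer its limit to $\tau_n$ through the frame comparisons \eqref{eq:clock-bound}--\eqref{eq:one-sided-clock}. Since $\Gamma_n=n+s_n\sim n$, it suffices to prove $\mathcal W_n/n\xrightarrow{\mathrm d}T_s$ together with convergence of the first two moments. We then need $|\tau_n/\Gamma_n-\mathcal W_n|=o_p(n)$, with enough uniform integrability to pass moments through.

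The upper bound is $\tau_n\le\Gamma_n(\mathcal W_n+2)$. For the lower bound, $\tau_n\ge\Gamma_n(K_n-1)$ and $K_n-1\le\mathcal W_n$ point the wrong way. We therefore use a direct slot count instead. At level $m$, the $G_{n,m}$ hole opportunities span at least $(G_{n,m}-1)\Gamma_n/(s_n+m)$ slots, because consecutive holes within a frame are separated by at least one slot and each frame carries $s_n+m$ holes. Summing over levels gives $\tau_n\ge\Gamma_n(\mathcal W_n-\sum_{m\le n}1/(s+m))\ge\Gamma_n(\mathcal W_n-O(\log n))$. Hence the discrepancy is $O(n\log n)=o(n^2)$ deterministically.

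\textbf{Limit of the clock.} Write
\[
\mathcal W_n/n=\sum_{m=1}^n \frac{G_{n,m}}{n(s_n+m)} .
\]
For each fixed $m$, $g_{n,m}=mp_n(1-p_n)^{m-1}\sim cm/n$, so $G_{n,m}g_{n,m}\xrightarrow{\mathrm d}E_m$ and
\[
\frac{G_{n,m}}{n(s_n+m)}\xrightarrow{\mathrm d}\frac{E_m}{cm(m+s)} ,
\]
jointly and independently over $m\le L$. The tail $m>L$ requires uniform control. We use
\[
g_{n,m}\ge mp_n(1-p_n)^{n}\ge c' m/n
\]
for large $n$, uniformly in $m\le n$, since $(1-p_n)^n\to e^{-c}$. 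This gives
\[
\E\sum_{m>L}\frac{G_{n,m}}{n(s_n+m)}\le\sum_{m>L}\frac{1}{c'm^2}\to0 \quad\text{as } L\to\infty ,
\]
uniformly in $n$. A standard truncation (Billingsley's approximation theorem) then yields $\mathcal W_n/n\xrightarrow{\mathrm d}T_s$.

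\textbf{Moments.} The mean and variance of $\mathcal W_n/n$ are explicit:
\[
\E\,\mathcal W_n/n=\sum_{m}\frac{1}{n(s_n+m)g_{n,m}},\qquad
\Var(\mathcal W_n/n)=\sum_{m}\frac{1-g_{n,m}}{n^2(s_n+m)^2g_{n,m}^2} .
\]
The summands converge pointwise to $1/(cm(m+s))$ and $1/(c^2m^2(m+s)^2)$, and they are dominated by $C/m^2$ and $C/m^4$ via the uniform bound on $g_{n,m}$. Dominated convergence then gives both moment limits. The geometric second moments also bound $\E(\mathcal W_n/n)^{3}$ uniformly, so $\tau_n/n^2$ is uniformly integrable up to the second power. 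Since the discrepancy is $o(n^2)$ deterministically, the moment convergence passes to $\tau_n/n^2$.

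\textbf{Closed forms.} The Laplace transform is
\[
\E e^{-uT_s}=\prod_{m\ge1}\Bigl(1+\frac{u}{cm(m+s)}\Bigr)^{-1}=\prod_m\frac{m(m+s)}{(m+a)(m+b)} ,
\]
where $a+b=s$ and $ab=u/c$. The Weierstrass product
\[
\prod_m\frac{m(m+s)}{(m+a)(m+b)}=\frac{\Gamma_{\mathrm E}(1+a)\Gamma_{\mathrm E}(1+b)}{\Gamma_{\mathrm E}(1+s)}
\]
holds because $a+b=s$ makes the exponential factors cancel. This proves \eqref{eq:fixed-lt}. For $s=0$, the reflection formula gives $\pi\sqrt{u/c}/\sinh(\pi\sqrt{u/c})$.

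The moments follow by partial fractions. Writing $1/(m(m+s))=(1/m-1/(m+s))/s$ telescopes to $h^{(1)}_s/s$. For the variance,
\[
\sum_m\frac{1}{m^2(m+s)^2}=\frac{1}{s^2}\sum_m\Bigl(\frac1m-\frac1{m+s}\Bigr)^2=\frac{1}{s^2}\Bigl(2\zeta(2)-h_s^{(2)}-\frac{2h_s^{(1)}}{s}\Bigr),
\]
using the same telescoping on the cross term. The $s=0$ values are $\zeta(2)/c$ and $\zeta(4)/c^2$.

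\textbf{Main obstacle.} We expect the hardest step to be the uniform lower bound on $g_{n,m}$ across all $m\le n$ together with the tail/uniform-integrability control. Care is needed because $p_n$ may not equal $c/n$ exactly, so we will work with $np_n\in[c/2,2c]$ for large $n$.
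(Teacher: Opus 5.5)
\textbf{The lower comparison has a genuine gap.} Your transfer of the law and of both moments from $\mathcal W_n/n$ to $\tau_n/n^2$ rests on a deterministic two-sided comparison, and its lower half is not established.

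The per-level claim that the $G_{n,m}$ opportunities at level $m$ span at least $(G_{n,m}-1)\Gamma_n/(s_n+m)$ slots does not follow from the two facts you cite. It is false in general, because the $s_n+m$ holes need not be evenly spaced. If the holes occupy contiguous phases, $s_n+m$ consecutive opportunities fit into $s_n+m-1$ slots, whereas your bound demands nearly a full frame. Even two adjacent holes violate it whenever $m<n$, and this is not an edge case: early on most phases are holes.

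What your two facts actually give is $\lfloor (G_{n,m}-1)/(s_n+m)\rfloor\Gamma_n\ge\Gamma_n\bigl(G_{n,m}/(s_n+m)-1\bigr)$. That is a loss of up to one whole frame per level. Summed over all $n$ levels it is $n\Gamma_n\asymp n^2$, the same order as $\tau_n$. So the claimed $O(n\log n)$ discrepancy is unproved. With only $\tau_n\le\Gamma_n(\mathcal W_n+2)$ you get asymptotic stochastic domination by $T_s$, not convergence in law or of moments.

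\textbf{Two repairs.} The paper's repair applies the correct per-level bound only to the last $m_\star$ levels, where the loss is at most $m_\star\Gamma_n=o(n^2)$. For the entrance it uses only nonnegativity and the one-sided bound $\Gamma_n(\mathcal W_{n,>m_\star}+2)$. Your uniform tail estimate, with the analogous variance bound for $L^2$, then removes the truncation.

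Alternatively, your global comparison is in fact true, but it must be proved frame by frame, not level by level. In a frame starting at level $m$, the $q=s_n+m$ holes are each visited once. Let $a_i$ be the number of counted visits at level $m-i$, for $0\le i\le r$, where $r$ is the number of deaths in the frame. Since each death consumes a visit, $A_i:=\sum_{i'\ge i}a_{i'}\le q-i$. Abel summation then gives
\[
\sum_{i=0}^{r}\frac{a_i}{q-i}=\frac{A_0}{q}+\sum_{i=1}^{r}A_i\Bigl(\frac{1}{q-i}-\frac{1}{q-i+1}\Bigr)\le 1+\sum_{i=1}^{r}\frac{1}{s_n+m-i+1}.
\]
If absorption occurs in the frame, the uncounted level-$0$ visits are dropped and the sums stop at $i=r-1$.

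Each death therefore adds at most $1/(s_n+\text{its level})$. Hence $\mathcal W_n\le K_n+\sum_{m=1}^n(s_n+m)^{-1}$. Combined with $\tau_n>\Gamma_n(K_n-1)$, this gives $\tau_n\ge\Gamma_n\bigl(\mathcal W_n-1-\sum_{m=1}^n(s_n+m)^{-1}\bigr)$, valid also for $s_n=0$.

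With this in hand your route is a genuine alternative to the paper's. A single deterministic $O(n\log n)$ sandwich lets the law and both moments pass from $\mathcal W_n/n$ to $\tau_n/n^2$ by Minkowski's inequality, with no truncation of $\tau_n$.

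The rest of your plan is sound: the uniform bound $g_{n,m}\ge c'm/n$, dominated convergence for the moments, the gamma-product evaluation, and the partial fractions. One small slip: the uniform-integrability remark would need third, not second, geometric moments. It is unnecessary anyway, since convergence in law plus convergence of second moments already gives it, and the deterministic sandwich makes it moot.
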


Because $s_n$ and $s$ are integers, $s_n=s$ eventually.  At fixed $m$, the
$m+s$ holes recur once per frame and
$ng_{n,m}\to cm$.  Thus the calendar holding time, divided by $n^2$,
approaches an exponential variable of rate $cm(m+s)$.  The subtle step is
uniformly controlling the time to enter a fixed defect level from $Q=n$;
Appendix~\ref{app:fixed} gives that entrance bound and the moment argument.

\subsection{Diverging slack: concentration and crossover}

Define the deterministic slack scale
\begin{equation}
 \beta_n^\star=\frac{\Gamma_n}{p_n}
 \sum_{m=1}^{n}\frac{1}{m(m+s_n)}
 \sim \frac{\Gamma_n n}{c}
 \sum_{m=1}^{n}\frac{1}{m(m+s_n)}.
 \label{eq:Bn}
\end{equation}
For $k\ge1$, write $h_k=\sum_{j=1}^k j^{-1}$.  For $s_n>0$, the sum has the
exact harmonic representation
\begin{equation}
 \sum_{m=1}^{n}\frac{1}{m(m+s_n)}
 =\frac{h_n+h_{s_n}-h_{n+s_n}}{s_n}.
 \label{eq:harmonic}
\end{equation}

\begin{theorem}[Unified diverging-slack law]
\label{thm:slack}
Suppose $s_n\to\infty$, $\Gamma_n/n$ remains bounded,
$np_n\to c\in(0,\infty)$, and use immediate release from the synchronized
start.  Then
\begin{equation}
 \frac{\tau_n}{\beta_n^\star}\longrightarrow1
 \quad\text{in }L^2.
 \label{eq:slack-limit}
\end{equation}
In particular, $\E\tau_n/\beta_n^\star\to1$.
Consequently,
\begin{align}
 1\ll s_n\ll n:\quad
 &\E\tau_n\sim\frac{n^2\log s_n}{c s_n},
 \label{eq:sublinear}\\
 s_n/n\to\theta>0:\quad
 &\frac{\tau_n}{n\log n}\xrightarrow{\mathrm p}
   \frac{1+\theta}{c\theta},\notag\\[-1mm]
 &\frac{\E\tau_n}{n\log n}\to\frac{1+\theta}{c\theta}.
 \label{eq:linear}
\end{align}
\end{theorem}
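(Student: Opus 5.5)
The plan is to run everything through the opportunity-weighted clock $\mathcal W_n$ of \eqref{eq:weighted-clock} and the two-sided comparison \eqref{eq:two-sided-clock}. Dividing \eqref{eq:two-sided-clock} by $\beta_n^\star/\Gamma_n$ gives
\[
\left|\frac{\tau_n}{\beta_n^\star}-\frac{\Gamma_n\mathcal W_n}{\beta_n^\star}\right|
\le \frac{\Gamma_n(2+n/s_n)}{\beta_n^\star}.
\]
It therefore suffices to establish two facts:
\begin{itemize}
\item[(i)] the deterministic error is negligible, $\Gamma_n(1+n/s_n)=o(\beta_n^\star)$;
\item[(ii)] $\Gamma_n\mathcal W_n/\beta_n^\star\to1$ in $L^2$.
\end{itemize}
For (i), keep only the terms $m\le s_n$ of the sum in \eqref{eq:Bn}, which gives $\sum_{m\le\min(n,s_n)}1/(m(m+s_n))\ge h_{\min(n,s_n)}/(2s_n)$. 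Hence $\beta_n^\star\gtrsim \Gamma_n n\log(\min(n,s_n))/(cs_n)$. Since $s_n\to\infty$, this dominates $\Gamma_n n/s_n$ by the diverging factor $\log\min(n,s_n)$. It also dominates $\Gamma_n$ itself, because $n/s_n$ is bounded below when $\Gamma_n/n$ is bounded.

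For (ii), compute the first two moments of $\mathcal W_n$ from the independent geometrics. We have $\E G_{n,m}=1/g_{n,m}$ and $\Var G_{n,m}\le 1/g_{n,m}^2$, where $g_{n,m}=mp_n(1-p_n)^{m-1}$. The difficulty is that $(1-p_n)^{m-1}$ is not close to $1$ uniformly; for $m$ of order $n$ it is close to $e^{-cm/n}$. So the first step is to bound the factor $(1-p_n)^{-(m-1)}$. With $np_n\to c$ it lies in $[1,e^{c'}]$, and it is $1+O(m/n)$. Consequently
\[
\Gamma_n\E\mathcal W_n=\frac{\Gamma_n}{p_n}\sum_{m=1}^n\frac{(1-p_n)^{-(m-1)}}{m(m+s_n)}.
\]
The correction term is bounded by $(\Gamma_n/p_n)\sum_m O(1/n)/(m+s_n)$, which is $O\bigl(\Gamma_n\log((n+s_n)/s_n)\bigr)$. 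I expect this step to be the main obstacle, because the relative error must be shown to vanish in every slack regime. When $s_n\ll n$ the correction is $O(\Gamma_n\log(n/s_n))$, while $\beta_n^\star\sim \Gamma_n n\log s_n/(cs_n)$, and $n/s_n\gg\log(n/s_n)$ makes the correction negligible. When $s_n\asymp n$ the correction is $O(\Gamma_n)$ against $\beta_n^\star\asymp\Gamma_n\log n$. If this crude bound turns out too weak in some intermediate regime, I would split the sum at $m=\epsilon n$: the block $m\le\epsilon n$ carries the $\log$ mass with factor $1+O(\epsilon)$, and the block $m>\epsilon n$ contributes $O(\Gamma_n n\cdot n/(\epsilon n)(n+s_n))=O(\Gamma_n/\epsilon)$. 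Either way this yields $\Gamma_n\E\mathcal W_n/\beta_n^\star\to1$.

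For the variance,
\[
\Var(\Gamma_n\mathcal W_n)\le C\frac{\Gamma_n^2}{p_n^2}\sum_m\frac{1}{m^2(m+s_n)^2}\le C'\frac{\Gamma_n^2n^2}{s_n^2}.
\]
Divided by $(\beta_n^\star)^2$, this is $O(1/\log^2\min(n,s_n))\to0$. Convergence of the mean together with vanishing variance gives (ii), and combined with (i) this proves \eqref{eq:slack-limit}. Convergence of the mean follows from $L^2$, hence $L^1$, convergence.

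The corollaries then follow by evaluating $\beta_n^\star$ with \eqref{eq:harmonic}.
\begin{itemize}
\item For $1\ll s_n\ll n$: $h_n-h_{n+s_n}\to0$ and $h_{s_n}\sim\log s_n$, with $\Gamma_n\sim n$. This gives $\beta_n^\star\sim n^2\log s_n/(cs_n)$.
\item For $s_n/n\to\theta$: $h_n+h_{s_n}-h_{n+s_n}=\log n+O(1)$ and $\Gamma_n\sim(1+\theta)n$. This gives $\beta_n^\star\sim(1+\theta)n\log n/(c\theta)$. Convergence in $L^2$ implies convergence in probability and of the mean, which gives \eqref{eq:linear}.
\end{itemize}
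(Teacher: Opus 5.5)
Your proposal is correct and follows the paper's proof essentially step for step. Both arguments reduce $\tau_n$ to the opportunity-weighted work $\mathcal W_n$ through the two-sided clock bound, show that $\Gamma_n\E\mathcal W_n/\beta_n^\star\to1$, bound $\Var(\mathcal W_n)=O(p_n^{-2}s_n^{-2})$ via $g_{n,m}^{-1}\le\kappa/(mp_n)$ so that the relative variance is $O(1/\log^2\min(n,s_n))$, and then read off the regimes from the harmonic identity.

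The only difference is in the mean step. Your uniform $O(m/n)$ bound on $(1-p_n)^{-(m-1)}-1$ already suffices in every regime, including sequences along which $s_n/n$ oscillates. The relative error is at most a constant times $\frac{s_n}{n}\log(1+n/s_n)/h_{\min(n,s_n)}$, which is at most $C/h_{\min(n,s_n)}$ and hence vanishes. The $\varepsilon n$ split you hold in reserve, which is the route the paper takes, is therefore unnecessary.
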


The proof is in Appendix~\ref{app:slack}.  Its main idea is to separate an
early region, whose total duration is lower order, from a small-active
region in which $g_{n,m}\sim cm/n$ uniformly.  In that latter region the
hole-opportunity clock and the calendar clock agree to leading order.
No phase-mixing or random-arrangement assumption is needed: the independent
opportunity-time geometrics in \eqref{eq:weighted-clock} add variances, whose
sum is smaller than $(\beta_n^\star)^2$ by a factor of order
$(\log s_n)^{-2}$.

\begin{corollary}[$\Gamma_n\sim2n$]
\label{cor:2n}
Under the assumptions of Theorem~\ref{thm:slack}, if $\Gamma_n/n\to2$, then
\begin{equation}
 \frac{\tau_n}{n\log n}\xrightarrow{\mathrm p}\frac{2}{c},
 \qquad
 \E\tau_n\sim\frac{2}{c}n\log n.
 \label{eq:2n}
\end{equation}
\end{corollary}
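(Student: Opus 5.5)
Corollary~\ref{cor:2n} follows by specializing Theorem~\ref{thm:slack} to $\theta=1$. The plan is to check that its hypotheses hold and then evaluate the constant $(1+\theta)/(c\theta)$.

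\textbf{Hypotheses.} If $\Gamma_n/n\to2$, then $s_n/n=\Gamma_n/n-1\to1$. Hence $s_n\to\infty$, and $\Gamma_n/n$ is eventually bounded. Together with the standing assumption $np_n\to c$, all hypotheses of Theorem~\ref{thm:slack} hold, and we are in the linear-slack regime \eqref{eq:linear} with $\theta=1$.

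\textbf{Constant.} The constant there is $(1+\theta)/(c\theta)=2/c$. Then \eqref{eq:linear} gives both $\tau_n/(n\log n)\xrightarrow{\mathrm p}2/c$ and $\E\tau_n/(n\log n)\to2/c$. The latter is $\E\tau_n\sim(2/c)\,n\log n$, which is \eqref{eq:2n}.

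\textbf{Check against the scale.} To guard against a misreading of the linear case, we also verify the constant directly from $\beta_n^\star$. By \eqref{eq:harmonic} with $s_n\sim n$, the numerator is $h_n+h_{s_n}-h_{n+s_n}=\log n+O(1)$. This uses $h_{s_n}=\log n+O(1)$ and $h_n-h_{n+s_n}=-\log 2+o(1)$. So $\sum_{m\le n} 1/(m(m+s_n))\sim\log n/n$. Then \eqref{eq:Bn} gives
\[
\beta_n^\star\sim\frac{2n\cdot n}{c}\cdot\frac{\log n}{n}=\frac{2}{c}\,n\log n .
\]
The $L^2$ convergence in \eqref{eq:slack-limit} gives $L^1$ convergence and convergence in probability of $\tau_n/\beta_n^\star$ to $1$. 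Multiplying by the deterministic ratio $\beta_n^\star/(n\log n)\to2/c$ yields both claims.

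No real obstacle is expected: all probabilistic work sits in Theorem~\ref{thm:slack}. The only care needed is the harmonic asymptotics, namely that the $O(1)$ corrections are negligible relative to $\log n$.
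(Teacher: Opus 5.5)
Your proposal is correct and follows the paper's own route: the corollary is Theorem~\ref{thm:slack} specialized to $\theta=1$, since $s_n/n\to1$, and $(1+\theta)/(c\theta)=2/c$. Your direct check $\beta_n^\star\sim(2/c)\,n\log n$ via \eqref{eq:harmonic} is redundant but valid, since it is the same harmonic expansion the paper uses to obtain \eqref{eq:linear}.
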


The mechanism is transparent from \eqref{eq:holes}.  At $s_n=0$, the final
$m$ defects have only $m$ holes, so their aggregate removal rate is of order
$m^2/n^2$.  With linear slack, even the final defect sees $\Theta(n)$ holes;
its removal rate is of order $m/n$.  Summing the reciprocal rates changes the
critical $n^2$ entrance law into a harmonic $n\log n$ law.

Table~\ref{tab:scales} summarizes the resulting acquisition scales.

\begin{table}[t]
\centering
\caption{Acquisition scales under immediate release.}
\label{tab:scales}
\small
\renewcommand{\arraystretch}{2}
\begin{tabular}{@{}ll@{}}
\toprule
Slack $s_n=\Gamma_n-n$ & Leading mean acquisition time \\
\midrule
$0$ & $\dfrac{\pi^2}{6c}n^2$ \\
fixed $s\ge1$ & $\dfrac{h_s}{cs}n^2$ \\
$1\ll s_n\ll n$ & $\dfrac{n^2\log s_n}{c s_n}$ \\
$s_n/n\to\theta>0$ & $\dfrac{1+\theta}{c\theta}n\log n$ \\
$\Gamma_n\sim2n$ & $\dfrac{2}{c}n\log n$ \\
\bottomrule
\end{tabular}
\end{table}

The synchronized start is a canonical stress test and is analytically clean.
An arbitrary admissible start may already contain singleton phases and
collided cohorts.  Proposition~\ref{prop:global} still gives finite-system
absorption; after at most one frame, all remaining phases are binary.  The
specific limit laws above, however, are claimed only for the synchronized
start and must not be transferred to a random entrance state without further
analysis.

\section{Numerical Results}
\label{sec:numerics}

For Fig.~\ref{fig:crossover}(a), define the finite-sum prediction
\begin{equation}
 t_{\Sigma,n}:=\frac{\Gamma_n n}{c}
 \sum_{m=1}^n\frac{1}{m(m+s_n)}.
 \label{eq:Tsum}
\end{equation}

\begin{figure*}[t]
\centering
\includegraphics[width=0.94\textwidth]{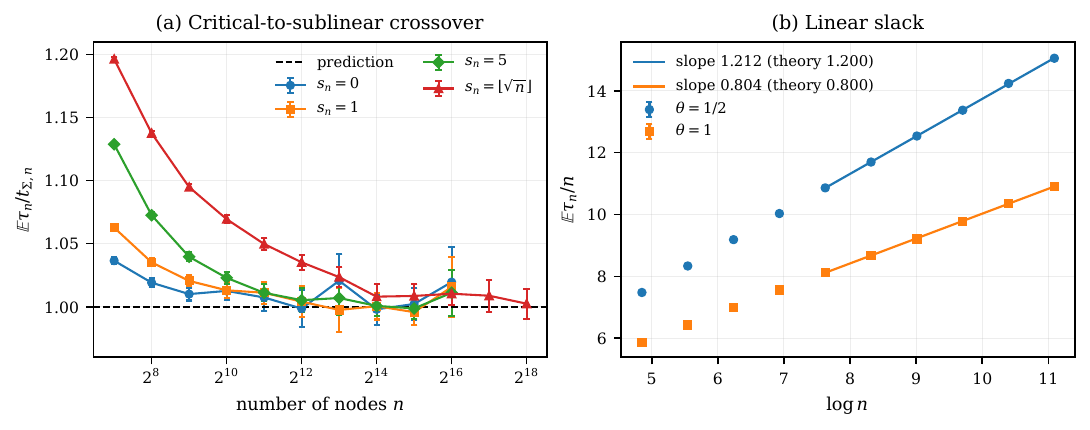}
\caption{Monte Carlo corroboration using an exact event-skipping simulator.
(a) Critical,
fixed-slack, and sublinear-slack means normalized by $t_{\Sigma,n}$.  (b)
Linear slack: the fitted slope of $\E\tau_n/n$ versus $\log n$ agrees with
Theorem~\ref{thm:slack}.  Bars show 95\% Monte Carlo intervals.}
\label{fig:crossover}
\end{figure*}

We use two independent simulators.  The first exploits the exact geometric
opportunity representation while retaining the cyclic locations of all
holes; it reaches $n=65{,}536$ for the principal regimes.  The second uses a
common slot-exact collision and AoI engine for 1-pTSA, RP--CTSA, and L-ZC;
each protocol retains its own state-update rule, and AoI resets only after a
decoded singleton.  Period-$n$ TDMA is used only as an ideal freshness
reference.  For a synchronized experiment, the common phase value is
$V_i=\Gamma_n$: AoI and the 1-pTSA age state start at $\Gamma_n$, the
equivalent RP--CTSA countdown starts at $C_i=1$, and all L-ZC nodes start in
the phase due in slot zero.  Pointwise mean bars are 95\%
normal Monte Carlo intervals and CDF bands are pointwise 95\% Wilson
intervals.  Event-skipping points use 2000--200000 trials through
$n=65{,}536$; the two additional square-root-slack points use 1000 trials.
The linear fits use ordinary least squares over the six sizes
$n=2048,\ldots,65536$.  Deterministic seeds, aggregate outputs, and trial-level
acquisition times accompany the paper; right-censored samples are never
reported as unconditional hitting times.

\subsection{Slack crossover}

Fig.~\ref{fig:crossover}(a) divides the event-skipping means by
$t_{\Sigma,n}$, the first-order finite sum in \eqref{eq:Bn}.  The fixed-slack
cases $s=0,1,5$ and the sublinear
sequence $s_n=\lfloor\sqrt n\rfloor$ approach one.  For $c=2.5$, weighted
high-precision estimates---inverse-variance combinations of 10000 trials at
each of $n=16384$ and $32768$---are $0.65798$, $0.39922$, and $0.18263$ for
$\E\tau_n/n^2$ at $s=0,1,5$, versus theoretical values $0.65797$, $0.40000$,
and $0.18267$.  Panel (b) tests the linear-slack coefficient through the slope
of $\E\tau_n/n$ against $\log n$.  The fitted slopes are $1.2121$ at
$s_n/n\to1/2$ and $0.8044$ at $\Gamma_n=2n$, compared with $1.2000$ and $0.8000$
from Theorem~\ref{thm:slack}.

The value $c=2.5$ is a representative access intensity, not an optimizer.
Both asymptotic theorems hold for arbitrary fixed $c\in(0,\infty)$ and
display their
explicit $1/c$ dependence.  Optimizing $c$ would require a separate collision
or energy objective and is outside the present contribution.

\subsection{Finite-horizon freshness}

The finite-horizon comparison uses $n=32$, thresholds $\Gamma_n=n$ and
$\Gamma_n=2n-1$, and a synchronized collision start.  Both 1-pTSA and
RP--CTSA use $p_n=2.5/n$, and neither is tuned separately.  As a structural
literature benchmark, we also simulate saturated L-ZC \cite{Fang2013} with
$\Gamma_n$ slots per frame.  After each frame, a successful singleton keeps
its phase.  Every collider independently keeps its phase with probability
$\gamma_{\rm LZC}=1/(\Gamma_n-n+2)$, the recommended asymptotic choice, and
otherwise selects uniformly among the complete set of phases observed idle
in the just-completed frame.  Thus $p_n$ is not an L-ZC parameter.

L-ZC acquires the same type of reusable collision-free word but is an
information-advantaged benchmark, not a same-feedback competitor: every node
needs its own outcome, the complete preceding-frame idle map, frame
synchronization, and knowledge of $n$.  All three protocols are deliberately
initialized with every node in phase zero, so each first encounters the same
synchronized collision.  This stress start replaces L-ZC's customary
iid-uniform initial phase selection.  We record its acquisition when the
newly selected phase word first becomes collision-free; its hitting time is
therefore frame-quantized and excludes an extra confirmation frame.

TDMA is not assigned an acquisition time.  Its steady average AoI $(n+1)/2$
is shown only as an ideal scheduled-access reference.  The choice
$\Gamma_n=2n-1$ is the conventional finite-$n$ threshold and satisfies
$(\Gamma_n-n)/n\to1$, so it belongs to the same asymptotic regime as
Corollary~\ref{cor:2n}.  Fig.~\ref{fig:finite-horizon} compares the acquisition
CDFs and normalized finite-horizon AAoI of all three protocols at both
thresholds.

\begin{figure*}[t]
\centering
\includegraphics[width=0.94\textwidth]{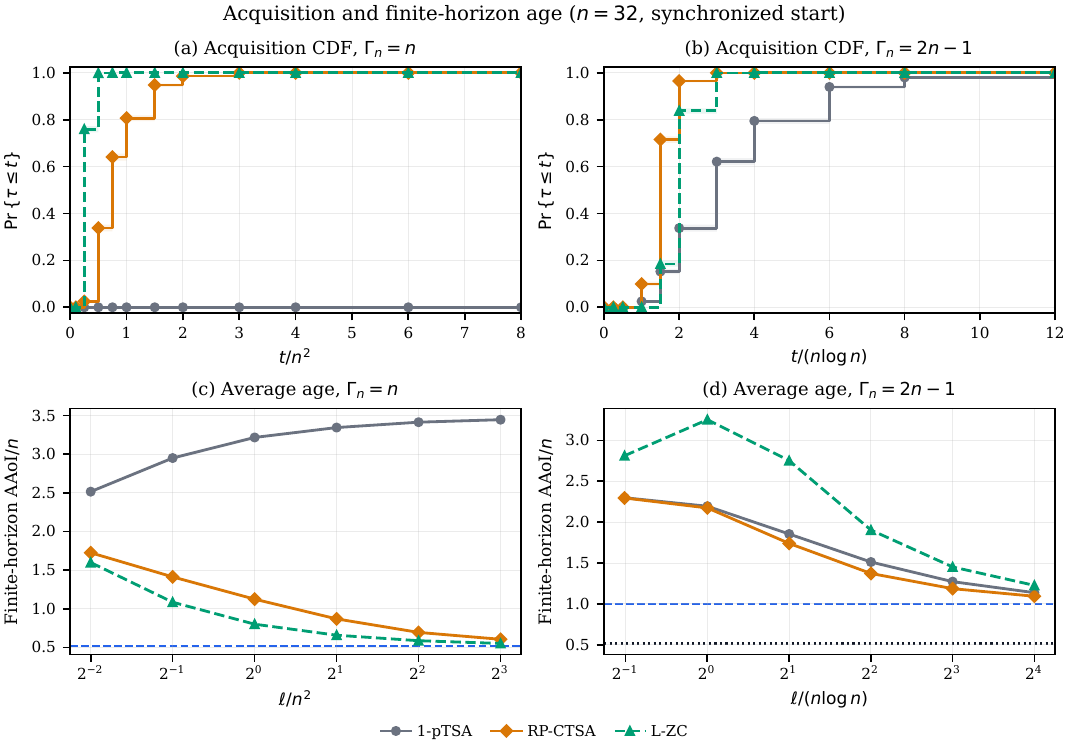}
\caption{Acquisition and finite-horizon AoI for $n=32$ and the same
synchronized collision start.  For 1-pTSA and RP--CTSA, $np_n=2.5$; L-ZC uses
$\gamma_{\rm LZC}=1/(\Gamma_n-n+2)$ and the preceding-frame idle map.  The blue
dashed horizontal line is the period-$\Gamma_n$ collision-free value
$(\Gamma_n+1)/(2n)$; the black dotted line is the period-$n$ TDMA value
$(n+1)/(2n)$, and the two coincide when $\Gamma_n=n$.  CDF bands are pointwise
95\% Wilson intervals from 4000 trials.}
\label{fig:finite-horizon}
\end{figure*}

At critical packing, L-ZC has acquired in $0.7600$ of the trials by
$\ell=n^2/4$, $0.9993$ by $n^2/2$, and all trials by $n^2$.  RP--CTSA has
acquired in $0.8070$ of the trials by $n^2$ and $0.9863$ by $2n^2$, whereas
none of the 4000 1-pTSA trials acquires by $8n^2$.  At $\ell=8n^2$,
normalized AAoI is $0.551$ for L-ZC, $0.605$ for RP--CTSA, and $3.446$ for
1-pTSA, while the period-$n$ TDMA steady reference is $0.516$.

The full idle-phase map gives L-ZC an additional critical-packing advantage, while
RP--CTSA achieves a large gain over 1-pTSA using less global information.

At $\Gamma_n=2n-1$, the acquired fractions by $2n\log n$ are $0.9653$ for
RP--CTSA, $0.8378$ for L-ZC, and $0.3378$ for 1-pTSA; both RP--CTSA and L-ZC
reach one by $4n\log n$.  By $16n\log n$, normalized AAoI is $1.093$,
$1.226$, and $1.137$ for RP--CTSA, L-ZC, and 1-pTSA, respectively, and all
approach the period-$\Gamma_n$ collision-free value one.  This ordering shows
that acquisition time alone does not determine transient age: L-ZC revises
its phase only at frame boundaries and each node transmits once per frame,
whereas RP--CTSA can deliver updates through in-frame active attempts while
it acquires.  Neither a counter reinsertion nor an L-ZC phase reassignment
counts as a delivery.

\subsection{Robustness and censoring}

In the iid random-phase experiment, each node draws
$V_i\sim\operatorname{Unif}\{1,\ldots,\Gamma_n\}$; the draw initializes its
age and 1-pTSA age state, while RP--CTSA uses the equivalent countdown
$C_i=\Gamma_n+1-V_i$ and L-ZC uses the corresponding phase
$(\Gamma_n-V_i)\bmod\Gamma_n$.  The same $V_i$ is paired across protocols.
This start usually shortens the transients.  For example, at
$\Gamma_n=2n-1$ and $\ell=4n\log n$, normalized AAoI changes from $1.512$ to $1.242$
for 1-pTSA, from $1.372$ to $1.118$ for RP--CTSA, and from $1.903$ to $1.186$
for L-ZC.  This experiment supports robustness but does not replace the
synchronized-start limit theorem.  At
$n=64$ and critical packing, none of 2000 1-pTSA trials acquires by the
$8n^2$ cutoff, whereas every RP--CTSA trial does and
$\E\tau_n/n^2=0.6926\pm0.0179$ (95\%), approaching
$\pi^2/(6c)=0.6580$.

\section{Conclusion}

The collision-free orbit of 1-pTSA is attractive, but acquisition can be long
because a collision can destroy a learned singleton phase.  RP--CTSA protects
that phase and releases only unresolved scheduled cohorts, using separate
reservation memory and AoI, individual ACKs, and one common RELEASE/HOLD
decision.  It preserves collision-free logical-slot throughput and AoI while
making the active population monotone after the synchronized release, or
after binaryization from a general admissible state.  Under $np_n\to c$ and synchronized
initialization, critical packing $\Gamma_n=n$ has the exact $n^2$ limit with mean
coefficient $\pi^2/(6c)$, whereas $\Gamma_n\sim2n$ concentrates at
$(2/c)n\log n$.  In the reported slot-level experiments, these shorter
transients translate into finite-horizon AoI gains over 1-pTSA.  Relative to
the information-richer L-ZC benchmark, RP--CTSA trades complete frame sensing
for local ACK/role feedback, and the finite-horizon age ordering depends on
the threshold regime.  Quantifying imperfect role detection and task-specific
utility beyond AoI are focused directions for future work.

\section*{Acknowledgment}

The author used OpenAI ChatGPT with Codex to assist with drafting and language
refinement throughout the manuscript, exploratory mathematical derivations in
Section~IV and the appendices, and development and verification of the
simulation and plotting code supporting Section~V.  The author independently
checked all protocol definitions, proofs, code, numerical results, and
references and takes full responsibility for the content.

\appendices

\section{Proof of Theorem~\ref{thm:fixed-slack}}
\label{app:fixed}

Fix a truncation level $m_\star$.  While $m\le m_\star$, cyclic counting and
\eqref{eq:weighted-clock} give
\begin{equation}
 \left|T_n^{(m_\star)}
 -\Gamma_n\sum_{m=1}^{m_\star}\frac{G_{n,m}}{m+s}\right|
 \le \Gamma_n m_\star,
 \label{eq:finite-cyclic}
\end{equation}
where $T_n^{(m_\star)}$ is the time from first reaching $m_\star$ actives to
absorption.
For every fixed $m$, $ng_{n,m}\to cm$ and
$g_{n,m}G_{n,m}\xrightarrow{\mathrm d}E_m$.  The independence in opportunity time
and convergence of the geometric second moments therefore give
\begin{equation}
 \frac{T_n^{(m_\star)}}{n^2}\xrightarrow{\mathrm d}
 \sum_{m=1}^{m_\star}\frac{E_m}{cm(m+s)}.
 \label{eq:finite-limit}
\end{equation}

It remains to remove the truncation.  Uniformly over $m\le n$, the access
assumption implies $g_{n,m}^{-1}\le \kappa n/m$ for some constant $\kappa$.
Put
\[
 \mathcal W_{n,>m_\star}=\sum_{m=m_\star+1}^{n}\frac{G_{n,m}}{m+s}.
\]
The one-sided frame-packing bound \eqref{eq:one-sided-clock} gives an upper
bound $\Gamma_n(\mathcal W_{n,>m_\star}+2)$ for the pre-$m_\star$ calendar
time.  Moreover,
\begin{align}
 \E\mathcal W_{n,>m_\star}
 &\le \kappa n\sum_{m>m_\star}\frac1{m(m+s)}
 \le\frac{\kappa n}{m_\star},\label{eq:entrance-first}\\
 \Var(\mathcal W_{n,>m_\star})
 &\le \kappa n^2\sum_{m>m_\star}\frac1{m^2(m+s)^2}
 \le\frac{\kappa n^2}{m_\star^3}.\label{eq:entrance-second}
\end{align}
Thus the $n^{-2}$-scaled entrance time vanishes in $L^2$, first as
$n\to\infty$ and then as $m_\star\to\infty$.  Equations
\eqref{eq:finite-limit}--\eqref{eq:entrance-second} prove weak convergence
and convergence of the first two moments: the finite-dimensional variables
are uniformly $L^2$, and the entrance remainder is uniformly $L^2$-small.
Finally,
\begin{align}
 \E T_s&=\frac1c\sum_{m\ge1}\frac1{m(m+s)},\label{eq:Ts-mean}\\
 \Var(T_s)&=\frac1{c^2}\sum_{m\ge1}\frac1{m^2(m+s)^2}.
 \label{eq:Ts-var}
\end{align}
Partial fractions yield \eqref{eq:fixed-mean}--\eqref{eq:fixed-var}.  Finally,
factor
$m(m+s)+u/c=(m+a_s(u))(m+b_s(u))$ and apply Euler's gamma product to obtain
\eqref{eq:fixed-lt}.  At $s=0$, Euler's product for $\sinh$ gives
\eqref{eq:s0-moments}.

\section{Proof of Theorem~\ref{thm:slack}}
\label{app:slack}

Write $s=s_n$ and use the exact weighted work
$\mathcal W_n$ in \eqref{eq:weighted-clock}.  Its first two moments are
\begin{align}
 \mu_n:=\E\mathcal W_n
 &=\frac1{p_n}\sum_{m=1}^{n}
 \frac{(1-p_n)^{-(m-1)}}{m(m+s)},\label{eq:W-mean}\\
 \Var(\mathcal W_n)
 &=\sum_{m=1}^{n}
 \frac{1-g_{n,m}}{(m+s)^2g_{n,m}^2}.\label{eq:W-var}
\end{align}
The finite-$n$ access factor in \eqref{eq:W-mean} is important for numerical
centering but not for the leading law.  Put
$\sigma_{n,s}=\sum_{m=1}^n[m(m+s)]^{-1}$.  If $np_n\le\kappa$, then for each fixed
$\varepsilon>0$,
\[
 0\le (1-p_n)^{-(m-1)}-1
 \le e^{\kappa\varepsilon+o(1)}-1,
 \qquad m\le\varepsilon n.
\]
The access factor is uniformly bounded for $m\le n$, while
\[
 \frac{\sum_{m>\varepsilon n}[m(m+s)]^{-1}}{\sigma_{n,s}}\longrightarrow0
\]
for $s\to\infty$ and $s=O(n)$.  Indeed, the numerator is $O(n^{-1})$;
for $s\le n$, $\sigma_{n,s}\ge h_s/(2s)$, while for $s>n$,
$\sigma_{n,s}\ge h_n/(s+n)$.  Consequently,
\[
 0\le\frac{p_n\mu_n-\sigma_{n,s}}{\sigma_{n,s}}
 \le e^{\kappa\varepsilon+o(1)}-1+o(1).
\]
Taking $n\to\infty$ and then $\varepsilon\downarrow0$ gives
\begin{equation}
 \mu_n\sim\frac1{p_n}
 \sum_{m=1}^{n}\frac1{m(m+s)}=\frac{\beta_n^\star}{\Gamma_n}.
 \label{eq:W-asymptotic}
\end{equation}

For a direct variance bound, $g_{n,m}^{-1}\le \kappa/(mp_n)$ gives
\begin{equation}
 \Var(\mathcal W_n)
 \le\frac{\kappa}{p_n^2}
 \sum_{m=1}^{n}\frac1{m^2(m+s)^2}
 \le\frac{\kappa}{p_n^2s^2}.
 \label{eq:W-concentration}
\end{equation}
Let $\lambda_{n,s}=h_n+h_s-h_{n+s}$.  Since
$\mu_n\asymp \lambda_{n,s}/(p_ns)$ and $\lambda_{n,s}\to\infty$ whenever
$s\to\infty$ with $s=O(n)$,
\[
 \frac{\Var(\mathcal W_n)}{\mu_n^2}=O(\lambda_{n,s}^{-2}).
\]
The deterministic packing bound \eqref{eq:clock-bound} also gives
\[
 \frac{1+n/s}{\mu_n}=O(\lambda_{n,s}^{-1}).
\]
More explicitly, \eqref{eq:two-sided-clock} yields
\begin{align*}
 \left\|\frac{\tau_n}{\beta_n^\star}-1\right\|_2
 &\le \frac{\Gamma_n}{\beta_n^\star}\sqrt{\Var(\mathcal W_n)}
 +\left|\frac{\Gamma_n\mu_n}{\beta_n^\star}-1\right|\\
 &\quad+\frac{\Gamma_n(2+n/s)}{\beta_n^\star}=o(1).
\end{align*}
Thus $\tau_n/\beta_n^\star\to1$ in $L^2$.  Expanding the exact harmonic identity
\eqref{eq:harmonic} gives \eqref{eq:sublinear} and \eqref{eq:linear}.

\end{document}